\newcommand{\Z}{\mathbb{Z}}
\newcommand{\R}{\mathbb{R}}
\newcommand{\C}{\mathbb{C}}
\newcommand{\cF}{\mathcal{F}}
\newcommand{\la}{\left\langle}
\newcommand{\ra}{\right\rangle}
\newcommand{\itpi}{i2\pi}
\newcommand{\dd}{\, \text{d}}
\newcommand{\pd}[2]{\frac{\partial #1}{\partial #2}}
\newcommand{\od}[2]{\frac{\mathrm{d} #1}{\mathrm{d} #2}}
\newcommand{\bra}[1]{\la #1 \right|}
\newcommand{\ket}[1]{\left| #1 \ra}
\newcommand{\modks}[1]{\, (\text{mod } #1)}
\newcommand{\comment}[1]{}
\DeclareMathOperator{\tr}{tr}
\theoremstyle{plain}
\newtheorem{theorem}{Theorem}
\newtheorem{lemma}{Lemma}
\theoremstyle{definition}
\newtheorem{definition}{Definition}
\newtheorem{property}{Property}
\newtheorem*{remark}{Remark}
\newcommand{\comm}[1]{}
\definecolor{cerise}{RGB}{222, 49, 99}
\title{On discrete Wigner transforms}
\author{Zhenning Cai}
\address{Department of Mathematics, National University of
  Singapore, Level 4, Block S17, 10 Lower Kent Ridge Road, Singapore 119076}
\email{matcz@nus.edu.sg}
\author{Jianfeng Lu}
\address{Department of Mathematics, Department of Physics,
  Department of Chemistry, Duke University, Box 90320, Durham NC 27708, USA}
\email{jianfeng@math.duke.edu}
\author{Kevin Stubbs}
\address{Department of Mathematics, Duke University, Box 90320, Durham NC 27708, USA}
\email{kstubbs@math.duke.edu}
\begin{document}

\begin{abstract}
  In this work, we derive a discrete analog of the Wigner transform over the space 
  $(\C^p)^{\otimes N}$ for any prime $p$ and any positive
  integer $N$. We show that the Wigner transform over this space can
  be constructed as the inverse Fourier transform of the standard
  Pauli matrices for $p=2$ or more generally of the Heisenberg-Weyl
  group elements for $p > 2$.  We connect our work to a previous
  construction by Wootters \cite{wootters} of a discrete Wigner
  transform by showing that for all $p$, Wootters' construction
  corresponds to taking the inverse symplectic Fourier transform
  instead of the inverse Fourier transform. Finally, we discuss some
  implications of these results for the numerical simulation of
  many-body quantum spin systems.
\end{abstract}

\maketitle

\section{Introduction}

For functions $f \in \R^d$, it is well known that its Wigner distribution, given by 
\begin{equation}
  W[f](\xi, x) = \int e^{- i 2\pi \xi \cdot p} f(x + \tfrac{1}{2} p) \overline{f(x - \tfrac{1}{2}p)}
  \dd p
\end{equation}
provides a useful characterization of the function. The Wigner
transformation has important applications include the phase space
formulation of quantum mechanics \cite{case2008}, time-frequency
analysis in signal processing \cite{cohen}, semiclassical analysis
\cite{folland, zworski}, just to name a few.  

In this paper, we consider the discrete analog of Wigner transforms;
that is, we aim for phase space representations for vectors in
$(\C^p)^{\otimes N}$, analogous to the continuous case. This natural
generalization has been considered by Wootters \cite{wootters} in
1987, surprisingly much later than the continuous formulation
\cite{Wigner1932, Weyl1927}.  Here we revisit the construction of
discrete Wigner transforms and we propose a natural analog of the
continuous Wigner transform essentially by ``discretizing the
integral'' as a sum over $\Z_p$. This leads to a very natural
extension of the continuous Wigner transform to the discrete state
space. Along the way, we will revisit Wootters' construction of the
discrete Wigner transform. While his proposed discrete Wigner
transform was through a quite indirect link to the continuous analog,
the connection becomes much more clear through our perspective.  In
fact, it turns out that our construction is tightly connected to the
Heisenberg-Weyl group as our discrete Wigner transform can be
represented as an inverse Fourier transform of the 
Heisenberg-Weyl group elements, whereas Wootters'
construction corresponds to an inverse symplectic Fourier transform.
As we will see, taking the inverse Fourier transform (as opposed to
the inverse symplectic Fourier transform) is more natural in some
sense.

One of the motivations of our work comes from recent proposed
numerical methods for quantum many-body spin dynamics based on phase
space representation \cite{Schachenmayer2015, Orioli2017}.  These
works extend the phase space numerical methods for Schr\"odinger
equations (see e.g., the review article \cite{Jin2011}) to the setting
of discrete state space. As the phase space representation is a
natural bridge between the Schr\"odinger equations and their
semiclassical limit, such numerical methods are expected to work well
in the semiclassical regime. Recent work in the physics literature
\cite{Schachenmayer2015, Orioli2017} also demonstrates its success
beyond that. Our study of the discrete Wigner transform is a first
step towards numerical analysis of these phase space numerical
methods, see Section~\ref{sec:dynamics}. A complete numerical analysis
will require mathematical understanding of the quantum entanglement of
the spin dynamics, which will be left for future work.

The rest of the paper is organized as follows. We recall some
preliminaries on spin space, discrete Fourier transforms, and the
Heisenberg-Weyl group in Section~\ref{sec:prelim}. In
Section~\ref{sec:DW_and_HW}, we propose a discrete Wigner transform
motivated from the continuous Wigner transform and illustrate its
connection to the Heisenberg-Weyl group. Wootters' construction of the
discrete Wigner transform will be recalled in
Section~\ref{sec:wootters} and compared with the current construction.
We prove that the proposed discrete Wigner transform
satisfies Wootters' requirements for a Wigner transform in Section~\ref{sec:cond}. In
Section~\ref{sec:dynamics}, we discuss the quantum dynamics in the
phase space representation, and in particular provide a derivation of
a recently proposed numerical method in the physics literature.

\section{Preliminaries}\label{sec:prelim}
\subsection{Properties of the $p^{\text{th}}$ roots of unity}
Throughout this paper, we will always consider a fixed prime $p$ and define $\omega$ be a $p^{\text{th}}$ root of unity (e.g., $\omega = e^{\itpi / p}$). $\omega$ has two important properties which we will make use of throughout our calculations.

\begin{property} \label{prop1}
If $p$ is a prime, then performing addition and multiplication by integers in the exponent of $\omega$ is equivalent to performing those operations modulo $p$.
\end{property}
\begin{proof}[Proof Sketch]
  The key to the property is noticing that $\omega^p = e^{\itpi} =
  1$. Suppose we have an integers $n,m\in \Z$ and can write $n = q_1p
  + r_1$ and $m = q_2 p + r_2$ where $q_1, q_2, r_1, r_2$ are integers. We calculate
  \[
    \omega^{n} = \omega^{q_1 p + r_1} = \omega^{q_1 p} \omega^{r_1} = \omega^{r_1}
  \]
  \[
  \omega^{nm} = \omega^{(q_1 p + r_1)(q_2 p + r_2)} = \omega^{q_1q_2
    p^2} \omega^{(r_2q_1 + r_1q_2)p} \omega^{r_1 r_2} = \omega^{r_1
    r_2}
  \]
  The other properties of modular arithmetic easily follow. 
\end{proof}
The qualification that the arithmetic is done with integers is
critical; some later arguments will become more complicated when we
have to consider $\omega$ raised to a non-integer power. 

\begin{property} \label{prop2}
  If $x \in \mathbb{R}$ then 
  \[
    \sum_{n (p)} \omega^{nx} = 
    \begin{cases} 
        p & x \in \Z \text{ and } x \equiv 0 \modks{p}; \\
        0 & \text{otherwise},
      \end{cases}
  \]
  where the notation $\sum_{n (p)}$ denotes any sum over a set of
  equivalence classes mod $p$.
\end{property}

\begin{proof}
  This follows easily from geometric series. If $x \equiv 0 \modks{p}$ then $\omega^x = 1$ so we get $\sum_{n (p)} \omega^{nx} = \sum_{n=0}^{p-1} 1 = p$. Otherwise, $\omega^x \neq 1$ so we use geometric series to conclude
  \[
    \sum_{n (p)} \omega^{nx} = \sum_{n=0}^{p-1} (\omega^{x})^n  = \frac{1 - \omega^{px}}{1 - \omega^x} = \frac{1 - (\omega^{p})^x}{1 - \omega^x} = 0.
  \]
\end{proof}
Because we will be considering many expressions modulo $p$, it will be
useful to define the Kronecker $\delta$-function modulo $p$ for
vectors $\alpha, \beta \in \Z^N$, $\alpha = (a_1, \dots, a_N)$, $\beta
= (b_1, \dots, b_N)$ as follows:
\[
  \delta^p_{\alpha,\beta} := \begin{cases}
    1 & a_i \equiv b_i \modks{p} \text{ for } i \in \{ 1, \dots, N\} \\
    0 & \text{otherwise}
  \end{cases}
\]

\subsection{A Comment on Arithmetic}
We will make explicit the difference between arithmetic in the set
$[p] := \{ 0, 1, \dots, p-1 \}$ and in the group $\Z_p$. In
particular, unless otherwise specified when we add or multiply two
elements from $[p]$ we use standard arithmetic not modular
arithmetic. Because of this convention, addition and multiplication of
elements from the set $[p]$ is not closed. As we will later see, due
to Property \ref{prop1}, the fact that $[p]$ is not closed under these
operations will not be important. As a more explicit example, when
$p=2$ and $a = b = 1$, 
\[
\begin{array}{l}
  a, b \in [p] \Longrightarrow a + b = 1 + 1 = 2 \\
  \addlinespace[2ex]
  a, b \in \Z_p \Longrightarrow a + b = 1 + 1 \equiv 0 \modks{2} \\
\end{array}
\]

\subsection{Spin Space and the Discrete Fourier transform}
Throughout this paper, we will consider the vector space $(\C^p)^{\otimes N}$ where $p$ is a prime number; this space naturally aries in the study of quantum spin systems. Following the bra-ket notation from quantum mechanics, we will denote any vector $v \in (\C^p)^{\otimes N}$ by $\ket{v}$ and use $\bra{v}$ to denote $\ket{v}$'s vector space dual.

When $N=1$, we will denote the standard basis for $\C^p$ as $\ket{0}, \ket{1}, \cdots, \ket{p-1}$ where $\ket{j}$ is a vector with a 1 in the $(j+1)^{\text{th}}$ position. We will adopt the convention that if $a \in \Z$ but $a \not\in [p]$ then $\ket{a} := \ket{a \modks{p}}$ and similarly for $\bra{a}$. 

For $N > 1$, we will write the standard basis for $(\C^p)^{\otimes N}$ as a Kronecker product $\ket{a_1 a_2 \cdots a_N} := \ket{a_1} \otimes \ket{a_2} \otimes \cdots \otimes \ket{a_N}$ where $a_i \in [p]$. For example, \\
\[
  \begin{array}{lll}
    \textbf{Space} & p = 5,\, N = 1 & p = 2,\, N = 2 \\
    \addlinespace[2ex]
    \textbf{Std. Basis} & \ket{0}, \, \ket{1}, \,\ket{2}, \ket{3}, \ket{4} & \ket{00}, \, \ket{01}, \, \ket{10}, \, \ket{11} \\
    \addlinespace[2ex]
    \textbf{Example} & \ket{2} := \begin{bmatrix} 0 & 0 & 1 & 0 & 0 \end{bmatrix}^{\top} & \ket{10} := \begin{bmatrix} 0 & 0 & 1 & 0  \end{bmatrix}^{\top} \\
  \end{array}
\]
~ \\
For our purposes, it will be useful to recall the inverse Fourier transform and inverse symplectic Fourier transform for vectors over $(\C^p)^{\otimes (2N)}$. Let $\alpha_1, \alpha_2, \beta_1, \beta_2 \in [p]^{N}$ where $\alpha_1 = (a_1, \cdots, a_N)$, $\alpha_2 = (a_{N+1}, \cdots, a_{2N})$ and $\beta_1 = (b_1, \cdots, b_N)$, $\beta_2 = (b_{N+1}, \cdots, b_{2N})$. With this notation, we have the Fourier transform, $\cF$, and symplectic Fourier transform, $\cF_{symp}$, given by:
\begin{equation} \label{eq:ft}
\begin{split} 
  \cF & = \frac{1}{p^N} \sum_{\beta_1} \sum_{\beta_2} \omega^{\alpha_1 \cdot \beta_1 + \alpha_2 \cdot \beta_2} \ket{\alpha_1\alpha_2} \bra{\beta_1 \beta_2} \\
           & := \frac{1}{p^N} \sum_{b_1} \cdots \sum_{b_{2N}} \omega^{a_1 b_1 + \cdots + a_{2N}b_{2N}}  \ket{a_1\cdots a_{2N}} \bra{b_1\cdots b_{2N}},
\end{split}
\end{equation}
and
\begin{equation}
\begin{split} \label{eq:symp_ft}
  \cF_{symp} & = \frac{1}{p^N} \sum_{\beta_1} \sum_{\beta_2} \omega^{\alpha_1 \cdot \beta_2 - \alpha_2 \cdot \beta_1} \ket{\alpha_1\alpha_2} \bra{\beta_1 \beta_2} \\
           & := \frac{1}{p^N} \sum_{b_1} \cdots \sum_{b_{2N}} \omega^{(a_1 b_{N+1} + \cdots + a_N b_{2N}) - (a_{N+1} b_{1} + \cdots + a_{2N} b_{N})} \ket{a_1\cdots a_{2N}} \bra{b_1\cdots b_{2N}}.
\end{split}
\end{equation}
Using Property~\ref{prop2}, it is not hard to verify that the following equations give the inverse Fourier transform and inverse symplectic Fourier transform:
\[
  \cF^{-1} = \frac{1}{p^N} \sum_{\beta_1} \sum_{\beta_2} \omega^{-(\alpha_1 \cdot \beta_1 + \alpha_2 \cdot \beta_2)} \ket{\alpha_1\alpha_2} \bra{\beta_1 \beta_2}; 
\]
\[
  \cF^{-1}_{symp} = \frac{1}{p^N} \sum_{\beta_1} \sum_{\beta_2} \omega^{-(\alpha_1 \cdot \beta_2 - \alpha_2 \cdot \beta_1)} \ket{\alpha_1\alpha_2} \bra{\beta_1 \beta_2}. 
\]

\subsection{The Discrete Heisenberg-Weyl group}
An important group which acts on spin space is the Heisenberg-Weyl group. For $N=1$, the Heisenberg-Weyl group can be defined through a collection of unitary operators $D(a_1,a_2)$, $a_1, a_2 \in [p]$:
\[
  D(a_1, a_2) := \sum_{\ell \in [p]} \omega^{a_2 \ell} \ket{a_1 + \ell} \bra{\ell}
\]
Using Property 1 and recalling our convention for bra-ket notation it is not hard to check that for any $b_1,b_2 \in \Z$, $D(b_1, b_2)$ is still well defined and $D(b_1, b_2) = D(b_1 \modks{p}, b_2 \modks{p})$. 

These operators satisfy a number of nice relationships
(see \cite{Howard2006}). In particular, in what follows we will use
the following two facts:
\begin{align*}
  & D(a_1, a_2) D(b_1, b_2) = \omega^{a_2 b_1} D(a_1 + b_1, a_2 + b_2)\\
  & D(a_1, a_2)^{-1} = D(a_1, a_2)^\dagger = \omega^{a_1 a_2} D(-a_1, -a_2).
\end{align*}

Furthermore, using the definition of the $D(a_1, a_2)$, it is not hard to verify that the collection $\{ D(a_1, a_2) \}$ forms an orthogonal basis of $p \times p$ matrices under the trace inner product (i.e. for all $a_1,a_2,b_1,b_2 \in [p]$, $\tr{(D(a_1, a_2)^\dagger D(b_1, b_2))} = p\,\delta^p_{(a_1,a_2),(b_1,b_2)}$).

\subsection{The Wigner and Fourier-Wigner
  Transforms} \label{sec:cont_wigner}
In Section \ref{sec:DW_and_HW} we will look at at the natural discretizations of the Wigner and Fourier-Wigner transforms.  Here we recall the following definitions of the continuous Wigner transform and the continuous Fourier-Wigner transform; for more discussions on those, see e.g., \cite{folland}.\\

\underline{The Wigner Transform on $f,g$:}
\[
  W[f](\xi,x) := \int e^{-\itpi \xi \cdot p} f(x+\tfrac{1}{2}p) \overline{f(x - \tfrac{1}{2}p)} \dd p
\]

\underline{The Fourier-Wigner Transform on $f,g$:}
\[
  FW[f](p,q) := \int e^{\itpi q \cdot y} f(y+\tfrac{1}{2}p) \overline{f(y - \tfrac{1}{2}p)} \dd y
\]

As one might expect from the name, the Fourier-Wigner transform is indeed the Fourier transform of the Wigner transform \cite{folland}.

\section{The Discrete Wigner transform and the Heisenberg-Weyl
  group} \label{sec:DW_and_HW}
\subsection{Deriving the Discrete Wigner and Discrete Fourier Wigner transforms}
Directly replacing the integrals in the definition of Wigner and Fourier-Wigner transforms as in Section \ref{sec:cont_wigner} with sums over $[p]$ and $e^{\itpi}$ with $\omega$ we can naturally the discretize the Wigner and Fourier-Wigner transforms as linear operators acting on $\C^p$ as follows: \\

\underline{The Discrete Wigner Transform:}
\begin{align*}
  W_{dis}(a_1,a_2) & = \sum_{\ell \in [p]} \omega^{-a_1 \ell} \ket{a_2 + \tfrac{1}{2}\ell} \bra{a_2 - \tfrac{1}{2}\ell} \\
                   & = \sum_{\ell \in [p]} \omega^{-a_1 \ell} \ket{a_2 + 2^{-1}\ell} \bra{a_2 - 2^{-1}\ell}
\end{align*}

\underline{The Discrete Fourier-Wigner Transform:}
\begin{align*}
  FW_{dis}(a_1,a_2) & = \sum_{\ell \in [p]} \omega^{a_2 \ell} \ket{\ell + \tfrac{1}{2} a_1} \bra{\ell - \tfrac{1}{2}a_1} \\
                    & = \sum_{\ell \in [p]} \omega^{a_2 \ell} \ket{\ell + 2^{-1} a_1} \bra{\ell - 2^{-1}a_1}
\end{align*}
Note that for the above equations to make sense, we need $2$ to be invertible modulo $p$ which is not the case when $p=2$. It turns out that the final formulas we derive will still hold for the case $p=2$ in some sense. To stress this point, in the above formulas and henceforth we will write $2^{-1}$ instead of $\tfrac{1}{2}$.

For $p > 2$, we can write both $W_{dis}(a_1, a_2)$ and
$FW_{dis}(a_1, a_2)$ quite compactly in terms of the Heisenberg-Weyl
group. Furthermore, the equations we derive for $p>2$ will allow us to
generalize nicely to the case $p=2$, which will follow afterwards.
\begin{theorem} \label{thm:wigner_hw_connection} For $p > 2$, let $U$
  be the ``flip'' operator
  $U := \sum_{\ell \in [p]} \ket{\ell} \bra{-\ell}$. We can write
  $W_{dis}(a_1,a_2)$ and $FW_{dis}(a_1,a_2)$ in terms of the
  Heisenberg-Weyl group as follows:
  \[
    W_{dis}(a_1,a_2) = \omega^{-2a_1 a_2} D(2a_2, -2a_1) U
  \]
  \[
    FW_{dis}(a_1,a_2) = \omega^{2^{-1} a_1a_2} D(a_1, a_2).
  \]
\end{theorem}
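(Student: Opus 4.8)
The plan is to prove both identities by direct computation: expand each side as a sum of rank-one operators $\ket{\cdot}\bra{\cdot}$, reindex the summation, and match coefficients. Since both $W_{dis}$ and $FW_{dis}$ are already given explicitly as sums over $\ell \in [p]$, and the Heisenberg-Weyl operators $D(a_1,a_2)$ have an explicit rank-one expansion, the whole argument reduces to verifying that two operators expressed in the standard basis agree entry by entry.

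I would start with the simpler second identity. First I would write out $D(a_1,a_2) = \sum_{m \in [p]} \omega^{a_2 m} \ket{a_1 + m}\bra{m}$, so that the claimed right-hand side is $\omega^{2^{-1} a_1 a_2} \sum_{m} \omega^{a_2 m} \ket{a_1+m}\bra{m}$. On the left, $FW_{dis}(a_1,a_2) = \sum_{\ell} \omega^{a_2 \ell} \ket{\ell + 2^{-1}a_1}\bra{\ell - 2^{-1}a_1}$. The natural move is the substitution $m = \ell - 2^{-1}a_1$ (valid since $2$ is invertible mod $p$ for $p>2$), which sends $\ell + 2^{-1}a_1 \mapsto m + a_1$ and $\omega^{a_2 \ell} \mapsto \omega^{a_2(m + 2^{-1}a_1)} = \omega^{2^{-1}a_1 a_2}\omega^{a_2 m}$; the prefactor $\omega^{2^{-1}a_1 a_2}$ pulls out of the sum, giving exactly the target. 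Here I must invoke Property~\ref{prop1} to justify that exponent arithmetic (including the half-integer shift landing back on integer indices after reindexing) can be carried out modulo $p$, and the bra-ket convention $\ket{a} := \ket{a \bmod p}$ to make sense of indices outside $[p]$.

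For the first identity I would proceed analogously. The flip operator gives $U = \sum_{k}\ket{k}\bra{-k}$, and composing $D(2a_2,-2a_1)U$ produces, after using the rank-one expansion $D(2a_2,-2a_1) = \sum_{m}\omega^{-2a_1 m}\ket{2a_2 + m}\bra{m}$ together with $\bra{m} \cdot \ket{k} = \delta^p_{m,k}$, a single sum $\sum_{k} \omega^{-2a_1 k}\ket{2a_2 + k}\bra{-k}$. Multiplying by the scalar $\omega^{-2a_1 a_2}$ and comparing with $W_{dis}(a_1,a_2) = \sum_{\ell}\omega^{-a_1\ell}\ket{a_2 + 2^{-1}\ell}\bra{a_2 - 2^{-1}\ell}$, the correct substitution is $\ell = 2k + 2a_2$, i.e.\ $k = 2^{-1}\ell - a_2$, which matches the ket index $2a_2 + k = a_2 + 2^{-1}\ell$, the bra index $-k = a_2 - 2^{-1}\ell$, and the exponent $-2a_1 k - 2a_1 a_2 = -a_1(2k + 2a_2) = -a_1 \ell$. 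Again the prefactor and exponent bookkeeping is justified by Property~\ref{prop1}.

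I expect the main obstacle to be purely bookkeeping rather than conceptual: keeping the index substitutions, the factors of $2$ and $2^{-1}$, and the scalar prefactors consistent, and making sure every exponent manipulation is one that Property~\ref{prop1} licenses (integer operations in the exponent of $\omega$). The one genuinely delicate point is that the discrete transforms involve $2^{-1}$ in the indices of kets and bras, so I would be careful that after reindexing every index is an honest element of $\Z_p$; the invertibility of $2$ for $p>2$ is exactly what guarantees the substitutions $m = \ell - 2^{-1}a_1$ and $k = 2^{-1}\ell - a_2$ are bijections on $\Z_p$, so that summing over $\ell \in [p]$ and over the shifted variable give the same set of terms.
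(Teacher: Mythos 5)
Your proposal is correct and follows essentially the same route as the paper's proof: a direct rank-one expansion followed by reindexing, using the invertibility of $2$ modulo $p$ to justify the substitutions (the paper performs the reindexing for $W_{dis}$ in two steps, $\ell \mapsto 2\ell$ then $\ell \mapsto \ell + a_2$, which composes to exactly your single substitution $\ell = 2k + 2a_2$). The scalar prefactors and exponents all check out, so there is no gap.
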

\begin{proof}
  This proof is a straightforward calculation. Beginning with $W_{dis}(a_1,a_2)$
  \begin{align*}
    W_{dis}(a_1,a_2) & = \sum_{\ell} \omega^{-a_1 \ell} \ket{a_2 + 2^{-1}\ell} \bra{a_2 - 2^{-1}\ell} \\
                     & = \sum_{\ell} \omega^{-2a_1 \ell} \ket{a_2 +\ell} \bra{a_2 - \ell} \\
                     & = \sum_{\ell} \omega^{-2a_1 (\ell + a_2)} \ket{2a_2 +\ell} \bra{- \ell} \\
                     & = \omega^{-2a_1 a_2} \sum_{\ell} \omega^{-2a_1 \ell} \ket{2a_2 +\ell} \bra{- \ell}.
  \end{align*}
  Recalling our definition for $U$ above we conclude that 
  \begin{align*}
    W_{dis}(a_1,a_2) & = \omega^{-2a_1 a_2} \left(\sum_{\ell} \omega^{-2a_1 \ell} \ket{2a_2 +\ell} \bra{\ell}\right) U\\ 
                     & = \omega^{-2a_1 a_2} D(2a_2, -2a_1) U.
  \end{align*}

  Similarly, for $FW_{dis}(a_1,a_2)$
  \begin{align*}
    FW_{dis}(a_1,a_2) & = \sum_{\ell} \omega^{a_2 \ell} \ket{\ell + 2^{-1} a_1} \bra{\ell - 2^{-1} a_1} \\
                     & = \sum_{\ell} \omega^{a_2 (\ell + 2^{-1} a_1)} \ket{\ell + a_1} \bra{\ell}  \\
                     & = \omega^{2^{-1} a_1a_2} \sum_{\ell} \omega^{a_2 \ell} \ket{\ell + a_1} \bra{\ell} \\
                     & = \omega^{2^{-1} a_1a_2} D(a_1, a_2).
  \end{align*}
\end{proof}

Similar to the continuous case, for $p > 2$, the discrete Wigner transform and the discrete Fourier-Wigner transform are related via the Fourier transform. 

\begin{theorem} \label{inverse_ft_wigner}
  For $p>2$, $W_{dis}(a_1,a_2)$ is the inverse Fourier transform of $FW_{dis}(a_1,a_2)$.
\end{theorem}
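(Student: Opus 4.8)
The plan is to verify the claim by a direct computation, taking the inverse Fourier transform of $FW_{dis}$ entrywise and checking that the result equals $W_{dis}$. Since both $W_{dis}(a_1,a_2)$ and $FW_{dis}(a_1,a_2)$ are $p \times p$ matrices indexed by a pair $(a_1,a_2) \in [p]^2$ (i.e.\ the case $N=1$ of $[p]^{2N}$), I would treat them as matrix-valued functions on $[p]^2$ and apply the $N=1$ specialization of the inverse Fourier transform $\cF^{-1}$ defined in the preliminaries componentwise. Concretely, the goal is to establish
\[
  W_{dis}(a_1,a_2) = \frac{1}{p} \sum_{b_1} \sum_{b_2} \omega^{-(a_1 b_1 + a_2 b_2)}\, FW_{dis}(b_1,b_2).
\]

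First I would substitute the definition $FW_{dis}(b_1,b_2) = \sum_{\ell} \omega^{b_2 \ell} \ket{\ell + 2^{-1} b_1}\bra{\ell - 2^{-1} b_1}$ into the right-hand side and collect the powers of $\omega$, producing a triple sum over $b_1,b_2,\ell$ with combined exponent $-a_1 b_1 + b_2(\ell - a_2)$. The key step is then to carry out the sum over $b_2$ first: the only $b_2$-dependence lives in $\omega^{b_2(\ell - a_2)}$, so Property~\ref{prop2} forces this inner sum to be $p\,\delta^p_{\ell,a_2}$. This collapses the $\ell$-summation onto the single term $\ell = a_2$ and cancels the $1/p$ prefactor, leaving $\sum_{b_1} \omega^{-a_1 b_1} \ket{a_2 + 2^{-1} b_1}\bra{a_2 - 2^{-1} b_1}$; relabeling the dummy index $b_1$ as $\ell$ gives exactly the definition of $W_{dis}(a_1,a_2)$.

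The computation itself is routine, so the only things requiring care are the sign bookkeeping and the interpretation of $\cF^{-1}$ as acting on a matrix-valued rather than scalar function. The one genuinely essential ingredient — playing the role that Fourier inversion (the delta function) plays in the continuous statement that $FW$ is the Fourier transform of $W$ — is recognizing the sum over the frequency variable $b_2$ against $\omega^{b_2(\ell-a_2)}$ as precisely the orthogonality relation of Property~\ref{prop2}. An alternative route would be to insert the Heisenberg-Weyl expressions $FW_{dis} = \omega^{2^{-1} a_1 a_2} D(a_1,a_2)$ and $W_{dis} = \omega^{-2 a_1 a_2} D(2a_2,-2a_1) U$ from Theorem~\ref{thm:wigner_hw_connection} and simplify, but this drags in the extra quadratic phase and the flip operator $U$, so I expect the direct substitution above to be cleaner and would prefer it.
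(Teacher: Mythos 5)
Your proposal is correct, and the computation goes through exactly as you describe: after substituting the definition of $FW_{dis}(b_1,b_2)$ into $\frac{1}{p}\sum_{b_1,b_2}\omega^{-(a_1b_1+a_2b_2)}FW_{dis}(b_1,b_2)$, the $b_2$-sum against $\omega^{b_2(\ell-a_2)}$ collapses to $p\,\delta^p_{\ell,a_2}$ by Property~\ref{prop2}, and the surviving $b_1$-sum is literally the defining expression for $W_{dis}(a_1,a_2)$. This is, however, a genuinely different route from the paper's. The paper takes the ``alternative route'' you mention and reject: it first expands the flip operator $U$ in the Heisenberg--Weyl basis by computing $\tr\bigl(D(a,b)^\dagger U\bigr)=\omega^{2^{-1}ab}$, obtaining $U=\frac{1}{p}\sum_{b_1,b_2}\omega^{2^{-1}b_1b_2}D(b_1,b_2)$, then substitutes this into $W_{dis}(2^{-1}a_1,2^{-1}a_2)=\omega^{-2^{-1}a_1a_2}D(a_2,-a_1)U$ and performs a shift of summation indices in the $D(b_1,b_2)$ basis, carefully tracking the quadratic phases. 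Your direct substitution is more elementary and shorter --- it needs only Property~\ref{prop2} and no group-algebra manipulations --- and it makes transparent that the statement is nothing but discrete Fourier inversion applied entrywise. What the paper's longer route buys is the explicit Heisenberg--Weyl expansion of $U$ and the identity $W_{dis}(a_1,a_2)=\frac{1}{p}\sum_{b_1,b_2}\omega^{-(a_1b_1+a_2b_2)}\omega^{2^{-1}b_1b_2}D(b_1,b_2)$ expressed in the $D$-basis, which are reused later (e.g.\ in the proof of Theorem~\ref{thm:wootter_hw_connection} and in verifying the line condition in Section~\ref{sec:cond}). Either proof is acceptable; yours stands on its own with no gaps.
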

\begin{proof}
  The key in proving this theorem is the expansion of $U$ in the
  Heisenberg-Weyl basis. Using the properties of the Heisenberg-Weyl
  group we have
  \begin{align*}
    D(a,b)^\dagger U & = \omega^{ab} D(-a,-b) U \\
               & = \omega^{ab} \left(\sum_{\ell} \omega^{-bc} \ket{-a + \ell}\bra{\ell} \right) U \\
               & = \omega^{ab} \sum_{\ell} \omega^{-bc} \ket{-a + \ell}\bra{-\ell}
  \end{align*}
  Taking the trace of both sides and noting $-a + \ell = -\ell$
  $\Longrightarrow$ $\ell = 2^{-1} a$, we conclude that
  $\tr(D(a,b)^\dagger U) = \omega^{2^{-1}ab}$ and so since the collection
  $\{ D(b_1,b_2) \}$ forms an orthogonal basis:
  \[
    U = \frac{1}{p}\sum_{b_1,b_2} \omega^{2^{-1}b_1b_2} D(b_1, b_2).
  \]
  Now, substituting this expansion into our expression for $W_{dis}(2^{-1}a_1, 2^{-1}a_2)$ we get:
  \begin{align*}
    W_{dis}(2^{-1}a_1, 2^{-1}a_2) & = \omega^{-2^{-1}a_1 a_2} D(a_2, -a_1) U \\
                                  & = \frac{1}{p}\,\omega^{-2^{-1}a_1 a_2} \sum_{b_1,b_2} \omega^{2^{-1}b_1b_2} D(a_2, -a_1) D(b_1, b_2) \\
                                  & = \frac{1}{p}\,\omega^{-2^{-1}a_1 a_2} \sum_{b_1,b_2} \omega^{2^{-1}b_1b_2} \omega^{-a_1b_1} D(a_2 + b_1, -a_1 + b_2) \\
                                  & = \frac{1}{p}\,\omega^{-2^{-1}a_1 a_2} \sum_{b_1,b_2} \omega^{2^{-1}(b_1-a_2)(b_2+a_1)} \omega^{-a_1(b_1-a_2)} D(b_1, b_2) \\
                                  & = \frac{1}{p}\,\omega^{-2^{-1}a_1 a_2} \sum_{b_1,b_2} \omega^{2^{-1}(b_1-a_2)(b_2-a_1)} D(b_1, b_2) \\
                                  & = \frac{1}{p}\,\omega^{-2^{-1}a_1 a_2} \sum_{b_1,b_2} \omega^{2^{-1} (b_1 b_2 - a_1 b_1 - a_2 b_2 + a_1 a_2)} D(b_1, b_2) \\
                                  & = \frac{1}{p} \sum_{b_1,b_2} \omega^{2^{-1} (-a_1 b_1 - a_2 b_2)} \omega^{2^{-1}b_1 b_2}  D(b_1, b_2).
  \end{align*}
  So we finally get
  \begin{align*}
    W_{dis}(a_1, a_2) & = \frac{1}{p} \sum_{b_1,b_2} \omega^{-(a_1 b_1 + a_2 b_2)} \omega^{2^{-1}b_1 b_2}  D(b_1, b_2) \\
                      & = \frac{1}{p} \sum_{b_1,b_2} \omega^{-(a_1 b_1 + a_2 b_2)} FW_{dis}(b_1,b_2).
  \end{align*}
\end{proof}
Let us stress that the above calculations only make sense when $p>2$. Despite this, we can give meaning to the discrete Fourier-Wigner transform in the case when $p=2$ by interpreting $2^{-1}$ to actually be the real number $\tfrac{1}{2}$. In this case we get the standard Pauli matrices $I,X,Y,Z$:
\begin{equation}
  \begin{array}{ccc}
    FW_{dis}(0,0) = D(0,0) = I & & FW_{dis}(0,1) = D(0,1) = Z \\
                              & & \\
    FW_{dis}(1,0)= D(1,0) = X & &  FW_{dis}(1,1) = i D(1,1) = Y
  \end{array}
\end{equation}
Likewise, we can define the discrete Wigner transform for $p=2$ as the inverse Fourier transform of $FW_{dis}$. More explicitly:
\begin{equation} \label{dis_wigner_p2}
  \begin{array}{ccc} 
    W_{dis}(0,0) = \frac{1}{2} \bigl( I + Z + X + Y \bigr) & & W_{dis}(0,1) = \frac{1}{2} \bigl( I - Z + X - Y \bigr) \\
                              & & \\
    W_{dis}(1,0)= \frac{1}{2} \bigl( I + Z - X - Y \bigr) & &  W_{dis}(1,1) = \frac{1}{2} \bigl( I - Z - X + Y \bigr)
  \end{array}
\end{equation}
It turns out that this choice for $W_{dis}$ and $FW_{dis}$ is in fact the natural choice.

\section{Wootters' Discrete Wigner Transform}\label{sec:wootters}
An alternate approach to defining the Wigner transform is by looking at the key properties of the continuous Wigner transform and taking their discrete analogues. This is the approach taken by Wootters in \cite{wootters}. To define the discrete Wigner transform in \cite{wootters}, Wootters associates a matrix $A_\alpha$ with each point $\alpha = (a_1, a_2) \in \Z_p^2$ and requires the collection $\{ A_\alpha \}_{\alpha \in \Z_p^2}$ to satisfy the following three key properties:

\begin{enumerate}[label=(W\arabic*),leftmargin=50pt]
\item\label{W1} For each $\alpha \in \Z^2_p$, $\tr A_\alpha = 1$;
\item\label{W2} For any $\alpha,\beta \in \Z^2_p$, $\tr A_\alpha^\dagger A_\beta = p \delta^p_{\alpha, \beta}$;
\item\label{W3} The ``line condition'' (see Definition~\ref{defn:line} below).
\end{enumerate}
The first two properties can be thought of as fixing the normalization and requiring the pairwise orthogonality of $\{ A_\alpha \}$ respectively. The third condition (which we have referred to as the line condition) corresponds to the property of the Wigner transform of a quantum state where integrating over infinite strips of the form $S_{a,b,c_1,c_2} := \{ (p,q) \in \R^2 : c_1 \leq ap + bq \leq c_2\}$ gives the probability that certain observables of that state have value between $c_1$ and $c_2$. As one might expect, the corresponding discrete analog of this property involves the definition of parallel lines on a finite lattice.

\subsection{Parallel Lines on $\Z_p^2$ and the Line Condition}
\begin{definition}
  For $(n,m) \in \Z_p^2 \setminus \{(0,0)\}$ and $k \in \Z_p$, the \textbf{line} $\ell_{n,m,k}$ on the lattice $\Z_p^2$ is a set of points:
  \[
    \ell_{n,m,k} := \{ (x,y) : nx + my \equiv k \modks{p} \}.
  \]
\end{definition}
\begin{definition}
  For $(n,m) \in \Z_p^2 \setminus \{(0,0)\}$ fixed, a \textbf{complete set of parallel lines} $L$ is the collection of $p$ lines:
  \[
    L = \{ \ell_{n,m,k} : k \in \Z_p\}
  \]
\end{definition}
It is not hard to show that $\ell_{n,m,k} \cap \ell_{n,m,k'} = \emptyset$ for $k \neq k'$ so each complete set of parallel lines forms a partition $\Z_p^2$. We are now able to define the ``line condition'' \ref{W3}

\begin{definition}[The Line Condition]\label{defn:line}
  
  For any collection of matrices $\{ A_\alpha \}_{\alpha \in \Z_p^2}$ and any line $\ell$ we can define an associated operator $P_\ell$:
  \[
    P_\ell := \frac{1}{p} \sum_{\alpha \in \ell} A_\alpha.
  \]
  We say the collection $\{ A_\alpha \}_{\alpha \in \Z_p^2}$ satisfies
  the line condition if for any complete set of parallel lines, $L$,
  the collection $\{ P_\ell \}_{\ell \in L}$ is a family of orthogonal
  projections which forms a resolution of the identity, i.e., 
  \begin{enumerate}
  \item For all $\ell_{k_1}, \ell_{k_2} \in L$, $P_{\ell_{k_1}} P_{\ell_{k_2}} = P_{\ell_{k_1}} \delta^p_{k_1,k_2}$
  \item $\sum_{\ell \in L} P_\ell = I$.
  \end{enumerate}
\end{definition}

\subsection{Wootters' Discrete Wigner Collection, $A_\alpha$} 
In the paper \cite{wootters}, Wootters gives the following definition for $\{ A_\alpha \}_{\alpha \in \Z_p^2}$, which he claims satisfies \ref{W1}, \ref{W2}, \ref{W3}:
\[
  \begin{array}{cl}
    p=2: & A_\alpha = \frac{1}{2} \left( I + (-1)^{a_1} Z + (-1)^{a_2} X + (-1)^{a_1 + a_2} Y \right) \\
         &\\
    p>2: & (A_{\alpha})_{jk} = \delta^p_{2a_1,j+k} ~\omega^{a_2(j-k)}.
  \end{array}
\]
The following theorem shows the
connection between $W_{dis}$, $FW_{dis}$, and $A_\alpha$.

\begin{theorem} \label{thm:wootter_hw_connection}
  For all $p$, $A_\alpha$ is the inverse symplectic Fourier transform of $FW_{dis}(a_1,a_2)$. In particular, for all $p$, $A_\alpha = W_{dis}(a_2, -a_1)$.
\end{theorem}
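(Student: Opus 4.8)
\emph{Plan.} The two assertions are really one computation: the symplectic Fourier transform is the ordinary Fourier transform precomposed with the phase-space symplectic form, i.e. a quarter-turn of the lattice $\Z_p^2$. Consequently, applying $\cF_{symp}^{-1}$ to $FW_{dis}$ must reproduce $W_{dis}$ evaluated at a symplectically rotated argument, and that rotated argument should be exactly the index carried by $A_\alpha$. I would therefore establish the ``in particular'' clause $A_\alpha = W_{dis}(\,\cdot\,)$ first and then read off the symplectic-transform statement as a corollary. I would also split into the case $p>2$, where Theorem~\ref{inverse_ft_wigner} and the manipulations of Theorem~\ref{thm:wigner_hw_connection} are available, and the case $p=2$, which must be handled by hand since $2^{-1}$ is undefined there.

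For $p>2$, I would compute the matrix elements of $W_{dis}$ directly from its definition. Writing $\bra{j}W_{dis}(c_1,c_2)\ket{k}$ and using $\langle j | c_2 + 2^{-1}\ell\rangle = \delta^p_{j,\,c_2+2^{-1}\ell}$, the two Kronecker deltas force $j+k\equiv 2c_2\modks{p}$ and $\ell\equiv j-k\modks{p}$, so that
\[
  \bra{j}W_{dis}(c_1,c_2)\ket{k} = \delta^p_{j+k,\,2c_2}\,\omega^{-c_1(j-k)}.
\]
Comparing this against Wootters' definition $(A_\alpha)_{jk}=\delta^p_{2a_1,\,j+k}\,\omega^{a_2(j-k)}$, the delta imposes $2c_2\equiv 2a_1$ and the phase imposes $-c_1\equiv a_2$; since $p$ is odd, $2$ is invertible, so $c_2\equiv a_1$ and $c_1\equiv -a_2$ are determined uniquely, identifying $A_\alpha$ with $W_{dis}$ at this quarter-turn of $\alpha$. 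The symplectic-transform statement then follows by rewriting the inverse symplectic phase $-(a_1 b_2 - a_2 b_1)$ appearing in $\cF_{symp}^{-1}FW_{dis}$ as an ordinary pairing $-(c_1 b_1 + c_2 b_2)$ at the same rotated point and invoking Theorem~\ref{inverse_ft_wigner}, which says precisely that such a sum is $W_{dis}$ of $FW_{dis}$ at that argument.

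For $p=2$ I would verify the identity directly on the four operators, using $FW_{dis}(0,0)=I$, $FW_{dis}(0,1)=Z$, $FW_{dis}(1,0)=X$, $FW_{dis}(1,1)=Y$, the explicit formulas \eqref{dis_wigner_p2} for $W_{dis}$, and Wootters' $p=2$ formula for $A_\alpha$; here the quarter-turn degenerates to the coordinate swap $(a_1,a_2)\mapsto(a_2,a_1)$ because $-1\equiv 1\modks{2}$, and a four-case comparison settles both claims simultaneously. I expect the only genuine difficulty to be bookkeeping rather than conceptual: keeping the signs of the symplectic rotation straight across the ordinary and symplectic conventions, and making the $p=2$ specialization fit the same statement even though $2^{-1}$ has no meaning there.
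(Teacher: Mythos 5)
Your route is correct in substance and genuinely more elementary for $p>2$ than the paper's. The paper never computes matrix elements of $W_{dis}$; it reads Wootters' formula as a sum of rank-one operators to get $A_\alpha=\omega^{2a_1a_2}D(2a_1,2a_2)U$, then matches this against the factorization $W_{dis}(a_1,a_2)=\omega^{-2a_1a_2}D(2a_2,-2a_1)U$ of Theorem~\ref{thm:wigner_hw_connection}, with Theorem~\ref{inverse_ft_wigner} supplying the passage to the symplectic transform exactly as you describe. Your entrywise comparison $\bra{j}W_{dis}(c_1,c_2)\ket{k}=\delta^p_{j+k,2c_2}\,\omega^{-c_1(j-k)}$ against $(A_\alpha)_{jk}=\delta^p_{2a_1,j+k}\,\omega^{a_2(j-k)}$ reaches the same identification with less machinery (no flip operator, no Heisenberg-Weyl multiplication law), at the cost of not exhibiting the group-theoretic form of $A_\alpha$ that the paper wants on record. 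The $p=2$ case is a four-operator brute-force check in both treatments.

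One point you must not elide: your computation pins down $(c_1,c_2)=(-a_2,a_1)$, i.e.\ $A_\alpha=W_{dis}(-a_2,a_1)$, whereas the statement you are proving asserts $A_\alpha=W_{dis}(a_2,-a_1)$. These coincide when $p=2$ (where $-1\equiv 1$) but not for $p>2$: for $p=3$ one has $A_{(1,0)}=\ket{1}\bra{1}+\ket{0}\bra{2}+\ket{2}\bra{0}=W_{dis}(0,1)$, while $W_{dis}(0,-1)=\ket{2}\bra{2}+\ket{1}\bra{0}+\ket{0}\bra{1}$ is a different operator. Your version is the correct one; it is also what the paper's own argument actually establishes, since $\omega^{2a_1a_2}D(2a_1,2a_2)U=W_{dis}(-a_2,a_1)$ under Theorem~\ref{thm:wigner_hw_connection}, and it is the only reading consistent with the closing display
\[
  A_\alpha=\frac{1}{p}\sum_{b_1,b_2}\omega^{-(a_1b_2-a_2b_1)}FW_{dis}(b_1,b_2)
  =\frac{1}{p}\sum_{b_1,b_2}\omega^{-\left((-a_2)b_1+a_1b_2\right)}FW_{dis}(b_1,b_2).
\]
So the ``in particular'' clause carries a sign/order slip. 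Writing ``identifying $A_\alpha$ with $W_{dis}$ at this quarter-turn'' papers over the mismatch; you should state explicitly that the rotation you obtain is $(a_1,a_2)\mapsto(-a_2,a_1)$, note that it is the inverse of the one claimed, and record that the discrepancy is in the statement rather than in your computation.
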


\begin{remark}
  This theorem has an exact analog in the continuous
  case in the following sense: If $W[f](\xi, x)$ and $FW[f](p,q)$
  are the Wigner and Fourier-Wigner transforms of $f$
  respectively then the inverse symplectic Fourier transform of
  $FW[f](p,q)$ is $W[f](x, -\xi)$.
\end{remark}

\begin{proof}
  \underline{Case 1, $p=2$:} Notice that in this case $A_\alpha$ is given by a linear combination of the Pauli matrices. With slight abuse of notation, we can specify this linear combination using a matrix equation as follows:
  \[
    \begin{bmatrix}
      A_{00} \\
      A_{01} \\
      A_{10} \\
      A_{11}
    \end{bmatrix}
     =
    \frac{1}{2}
    \begin{bmatrix}
      1 & 1 & 1 & 1 \\
      1 & 1 & -1 & -1 \\
      1 & -1 & 1 & -1 \\
      1 & -1 & -1 & 1
    \end{bmatrix}
    \begin{bmatrix}
      I \\
      Z \\
      X \\
      Y
    \end{bmatrix}
    =
    \frac{1}{2}
    \begin{bmatrix}
      1 & 1 & 1 & 1 \\
      1 & 1 & -1 & -1 \\
      1 & -1 & 1 & -1 \\
      1 & -1 & -1 & 1
    \end{bmatrix}
    \begin{bmatrix}
      FW_{dis}(0,0) \\
      FW_{dis}(0,1) \\
      FW_{dis}(1,0) \\
      FW_{dis}(1,1) 
    \end{bmatrix}\,.
  \]
  Recalling the definition of a $p^2 \times p^2$ inverse symplectic Fourier transform over $\Z_p$ (where $\beta := (b_1, b_2)$) 
  \[
    \cF^{-1}_{symp} = \frac{1}{p} \sum_{\alpha,\beta \in [p]^2} \omega^{-(a_1 b_2 - a_2 b_1)} \ket{\beta} \bra{\alpha}.
  \]
  A simple calculation in the case when $p=2$ and $\omega = e^{\itpi/2} = -1$ verifies the claim. \\
  
  \underline{Case 2, $p>2$:} We will prove this fact by showing that
  \[
    A_\alpha = \omega^{2a_1a_2} D(2a_1,2a_2)U,
  \]
  where $U$ is the ``flip'' operator $U := \sum_{\ell} \ket{\ell}\bra{-\ell}$. Since we have already shown that $W_{dis}(a_1,a_2) = \omega^{-2a_1a_2} D(2a_2,-2a_1)U$ (see Theorem \ref{thm:wigner_hw_connection}) the result follows since $W_{dis}$ is the Fourier transform of $FW_{dis}$. Recall the definition of $A_\alpha$ for $p>2$.
  \[
    (A_{\alpha})_{k\ell} = \delta^p_{2a_1,k+\ell} ~\omega^{a_2(k-\ell)}.
  \]
  Clearly, an entry of $A_\alpha$ is not zero if and only if $2a_1 \equiv k + \ell \Longrightarrow k \equiv 2a_1 - \ell$. Therefore,
  \begin{align*}
    A_\alpha & = \sum_{\ell \in [p]} \omega^{a_2(2a_1 - \ell-\ell)} \ket{2a_1 - \ell} \bra{\ell} \\
             & = \omega^{2a_1a_2} \sum_{\ell} \omega^{-2a_2\ell} \ket{2a_1 - \ell} \bra{\ell} \\
             & = \omega^{2a_1a_2} \sum_{\ell} \omega^{2a_2\ell} \ket{2a_1 + \ell} \bra{-\ell} \\
             & = \omega^{2a_1a_2} \left(\sum_{\ell} \omega^{2a_2\ell} \ket{2a_1 + \ell} \bra{\ell}\right)U \\
             & = \omega^{2a_1 a_2} D(2a_1, 2a_2) U.
  \end{align*}
  Therefore, $A_\alpha = W(a_2, -a_1)$ and we can write
  \[
    A_\alpha = \frac{1}{p} \sum_{b_1,b_2} \omega^{-(a_1 b_2 - a_2 b_1)} FW_{dis}(b_1,b_2).
  \]
\end{proof}

\section{$W_{dis}(a_1,a_2)$ satisfies Wootters' conditions}\label{sec:cond}
In this section, we show that the discrete Wigner transform $W_{dis}$
satisfies Wootters' three conditions, thus verifying from another
angle those are natural generalization of the Wigner transform to
the discrete setting.

\subsection{Properties of $W_{dis}$ and $FW_{dis}$} \label{sec:props_dw_dfw}
We begin by deriving some of the important properties of $W_{dis}$ and $FW_{dis}$
\begin{lemma}
  For all $p$, the collection $\{ FW_{dis}(a_1,a_2) \}$ is pairwise orthogonal under the trace inner product. That is
  \[
    \tr\bigl(FW_{dis}(a_1,a_2)^\dagger FW_{dis}(b_1,b_2)\bigr) = p\,\delta^p_{(a_1,a_2),(b_1,b_2)}
  \]
\end{lemma}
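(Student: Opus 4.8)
The plan is to reduce the statement to the already-established orthogonality of the Heisenberg-Weyl basis, treating $p > 2$ and $p = 2$ separately. For $p > 2$, I would invoke Theorem~\ref{thm:wigner_hw_connection}, which gives $FW_{dis}(a_1,a_2) = \omega^{2^{-1} a_1 a_2} D(a_1, a_2)$. Since $\omega$ is a root of unity the prefactor is unimodular with $\overline{\omega^{2^{-1}a_1a_2}} = \omega^{-2^{-1}a_1a_2}$, so I would pull the scalars out of the trace to obtain
\[
  \tr\bigl(FW_{dis}(a_1,a_2)^\dagger FW_{dis}(b_1,b_2)\bigr) = \omega^{2^{-1}(b_1 b_2 - a_1 a_2)}\, \tr\bigl(D(a_1,a_2)^\dagger D(b_1,b_2)\bigr),
\]
and then substitute the known identity $\tr(D(a_1,a_2)^\dagger D(b_1,b_2)) = p\,\delta^p_{(a_1,a_2),(b_1,b_2)}$ recorded in the preliminaries on the Heisenberg-Weyl group.

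The observation that makes the extra phase harmless is that it only ever multiplies a factor that is nonzero on the diagonal. When $(a_1,a_2) \not\equiv (b_1,b_2) \modks{p}$ the Kronecker delta already kills the right-hand side regardless of the prefactor, so off-diagonal orthogonality is automatic. When $(a_1,a_2) \equiv (b_1,b_2) \modks{p}$, Property~\ref{prop1} gives $a_1 a_2 \equiv b_1 b_2 \modks{p}$, so the exponent $2^{-1}(b_1 b_2 - a_1 a_2)$ is $\equiv 0 \modks{p}$ and the prefactor collapses to $1$, leaving exactly $p$. This recovers the claimed formula for all $p > 2$.

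For $p = 2$ the factorization through $2^{-1}$ is unavailable, so I would instead use the explicit identification of $FW_{dis}(0,0), FW_{dis}(0,1), FW_{dis}(1,0), FW_{dis}(1,1)$ with the Pauli matrices $I, Z, X, Y$ established earlier in Section~\ref{sec:DW_and_HW}. These four matrices are classically orthogonal under the trace inner product, with $\tr(P^\dagger Q) = 2\,\delta_{P,Q}$; checking the pairings directly (or citing this standard fact) gives the $p=2$ instance immediately.

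I expect the only real obstacle to be the phase bookkeeping in the $p > 2$ case, namely making precise that the prefactor $\omega^{2^{-1}(b_1b_2 - a_1a_2)}$ cancels exactly on the diagonal via Property~\ref{prop1}, together with the minor annoyance that $p=2$ must be handled by hand rather than through the unified Heisenberg-Weyl formula. Neither step is deep: the lemma is essentially a corollary of the orthogonality of $\{D(a_1,a_2)\}$ combined with the unimodularity of the Fourier-Wigner phase.
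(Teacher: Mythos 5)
Your proposal is correct and follows essentially the same route as the paper: the $p>2$ case reduces to the orthogonality of the Heisenberg--Weyl operators with the phase $\omega^{2^{-1}(b_1b_2-a_1a_2)}$ vanishing on the diagonal (the paper re-derives the $D$-orthogonality from the group multiplication law rather than citing the preliminaries, a trivial difference), and the $p=2$ case is checked directly on the Pauli matrices.
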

\begin{proof}
  To simplify the proof, we split into two cases, $p=2$ and $p>2$.
  
  \underline{Case 1, $p=2$:} The collection $\{ FW_{dis}(a_1,a_2) \}$ is the Pauli matrices so it can be easily checked that they are pairwise orthogonal.

  \underline{Case 2, $p>2$:} In this case we have
  \begin{align*}
    \tr\bigl(FW_{dis}(a_1,a_2)^\dagger FW_{dis}(b_1,b_2)\bigr) & = \tr\biggl(\omega^{-2^{-1}a_1 a_2}  D(a_1, a_2)^\dagger \omega^{2^{-1}b_1 b_2}  D(b_1, b_2) \biggr) \\
                                                               & = \omega^{2^{-1}(b_1 b_2 - a_1 a_2)}\, \omega^{a_1a_2} \tr\biggl(D(-a_1, -a_2)  D(b_1, b_2) \biggr) \\
                                                               & = \omega^{2^{-1}(b_1 b_2 - a_1 a_2)} \omega^{a_1a_2-b_1 a_2} \tr\biggl(D(b_1-a_1, b_2-a_2) \biggr) 
  \end{align*}
  But $\tr\bigl(D(b_1-a_1, b_2-a_2) \bigr) \neq 0$ if and only if $a_1 \equiv b_1 \modks{p}$ and $a_2 \equiv b_2 \modks{p}$. Since for $p > 2$, $2^{-1}$ is an integer, we conclude that
  \[
    \tr\biggl(FW_{dis}(a_1,a_2)^\dagger FW_{dis}(b_1,b_2)\biggr) = p \omega^{2^{-1}(a_1 a_2 - a_1 a_2)} \omega^{a_1a_2-a_1 a_2} = p.
  \]
\end{proof}

\begin{lemma}
  For all $p$, the collection $\{ W_{dis}(a_1,a_2) \}$ is pairwise orthogonal under the trace inner product. That is
  \[
    \tr\bigl(W_{dis}(a_1,a_2)^\dagger FW_{dis}(b_1,b_2)\bigr) = p \,\delta^p_{(a_1,a_2),(b_1,b_2)}
  \]
\end{lemma}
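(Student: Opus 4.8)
The plan is to derive the orthogonality of $\{W_{dis}(a_1,a_2)\}$ from the orthogonality of $\{FW_{dis}(a_1,a_2)\}$ just established, using that $W_{dis}$ is the inverse Fourier transform of $FW_{dis}$. The key structural point is that the relation
\[
  W_{dis}(a_1,a_2) = \frac{1}{p}\sum_{b_1,b_2} \omega^{-(a_1 b_1 + a_2 b_2)}\, FW_{dis}(b_1,b_2)
\]
holds for \emph{every} $p$: for $p>2$ it is Theorem~\ref{inverse_ft_wigner}, and for $p=2$ it is precisely how we defined $W_{dis}$ in \eqref{dis_wigner_p2}. Working from this identity therefore lets me treat all $p$ uniformly, rather than splitting into the cases $p=2$ and $p>2$ as in the previous lemma; in effect the computation is a discrete Plancherel identity. (I will write the second operator as $W_{dis}(a_1',a_2')$, so that the target is the pairwise orthogonality $\tr(W_{dis}(a_1,a_2)^\dagger W_{dis}(a_1',a_2')) = p\,\delta^p_{(a_1,a_2),(a_1',a_2')}$ of the $W_{dis}$ named in the statement.)

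Concretely, I would take the dagger of the displayed identity (using $\overline{\omega^{k}}=\omega^{-k}$), substitute it together with the expansion of $W_{dis}(a_1',a_2')$ into $\tr(W_{dis}(a_1,a_2)^\dagger W_{dis}(a_1',a_2'))$, and expand by bilinearity of the trace. The inner traces are exactly $\tr(FW_{dis}(b_1,b_2)^\dagger FW_{dis}(b_1',b_2')) = p\,\delta^p_{(b_1,b_2),(b_1',b_2')}$ by the previous lemma, which collapses the quadruple sum to the single sum $\frac{1}{p}\sum_{b_1,b_2}\omega^{(a_1-a_1')b_1 + (a_2-a_2')b_2}$. This factors into two independent geometric sums over $\Z_p$, and Property~\ref{prop2} evaluates the two sums to $p\,\delta^p_{a_1,a_1'}$ and $p\,\delta^p_{a_2,a_2'}$, yielding $p\,\delta^p_{(a_1,a_2),(a_1',a_2')}$, as claimed.

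I do not anticipate a substantive obstacle; the one point to watch is that every exponent above is a genuine integer (no surviving factor of $2^{-1}$), so Property~\ref{prop2} applies verbatim and the argument is valid even at $p=2$. For contrast, a direct proof from the Heisenberg--Weyl form $W_{dis}(a_1,a_2)=\omega^{-2a_1a_2}D(2a_2,-2a_1)U$ of Theorem~\ref{thm:wigner_hw_connection} would instead use $U^\dagger U = U^2 = I$ to reduce to the orthogonality of the $D(b_1,b_2)$ and the invertibility of $2$ modulo $p$ to simplify the resulting delta; that works for $p>2$ but requires handling $p=2$ separately, which is exactly what the Plancherel route avoids.
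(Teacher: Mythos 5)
Your proposal is correct and follows essentially the same route as the paper's own proof: expand both Wigner operators via the inverse Fourier transform relation, invoke the orthogonality of $\{FW_{dis}(b_1,b_2)\}$ from the preceding lemma to collapse the quadruple sum, and finish with Property~\ref{prop2} applied to the two resulting geometric sums. The only (welcome) addition is your explicit remark that all surviving exponents are integers, which justifies the uniform treatment of $p=2$ and $p>2$ that the paper uses implicitly.
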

\begin{proof}
\begin{align*}
  \tr \biggl(W_{dis}(\tilde{a}_1, \tilde{a}_2)^\dagger W_{dis}(a_1, a_2)\biggr) & = \frac{1}{p^2} \sum_{b_1, b_2} \sum_{\tilde{b}_1, \tilde{b}_2} \omega^{(\tilde{a}_1 \tilde{b}_1 + \tilde{a}_2 \tilde{b}_2)}\omega^{-(a_1 b_1 + a_2 b_2)}  \tr\biggl(FW_{dis}(\tilde{b}_1, \tilde{b}_2)^\dagger FW_{dis}(b_1, b_2)\biggr) \\
                                                                               & = \frac{1}{p} \sum_{b_1, b_2} \omega^{(\tilde{a}_1 b_1 + \tilde{a}_2 b_2)} \omega^{-(a_1 b_1 + a_2 b_2)} \\
   & = \frac{1}{p} \sum_{b_1} \omega^{(a_1 - \tilde{a}_1) b_1} \sum_{b_2} \omega^{(a_2 - \tilde{a}_2) b_2}
\end{align*}
So by Property \ref{prop2}, we conclude the result.
\end{proof}

\begin{lemma}
For all $p$ and $a_1,a_2 \in \Z_p$, $W_{\text{dis}}(a_1,a_2)$ is Hermitian.
\end{lemma}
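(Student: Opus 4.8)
The plan is to split into the two cases $p=2$ and $p>2$, exactly as in the preceding lemmas, since the symbol $2^{-1}$ in the definition of $W_{dis}$ only makes direct sense when $p>2$. For $p=2$ I would simply read off Hermiticity from the explicit formulas in \eqref{dis_wigner_p2}: there each $W_{dis}(a_1,a_2)$ is written as $\tfrac12$ times a linear combination of the Pauli matrices $I,X,Y,Z$ with real coefficients $\pm 1$. Since each Pauli matrix is Hermitian and the coefficients are real, every $W_{dis}(a_1,a_2)$ is a real linear combination of Hermitian matrices and hence Hermitian. This case needs no computation beyond that observation.

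For $p>2$ I would compute the adjoint directly from the definition
\[
  W_{dis}(a_1,a_2) = \sum_{\ell \in [p]} \omega^{-a_1\ell}\, \ket{a_2 + 2^{-1}\ell}\bra{a_2 - 2^{-1}\ell}.
\]
Taking the adjoint sends $\ket{u}\bra{v}$ to $\ket{v}\bra{u}$ and conjugates the scalar; since $\omega$ is a root of unity we have $\overline{\omega^{-a_1\ell}} = \omega^{a_1\ell}$, so that
\[
  W_{dis}(a_1,a_2)^\dagger = \sum_{\ell \in [p]} \omega^{a_1\ell}\, \ket{a_2 - 2^{-1}\ell}\bra{a_2 + 2^{-1}\ell}.
\]
The key step is then the reflection $\ell \mapsto -\ell$: because the summand depends on $\ell$ only modulo $p$ (through Property~\ref{prop1} in the exponent and through our bra-ket convention in the kets), and because $\{-\ell : \ell \in [p]\}$ is again a complete set of residues mod $p$, this reindexing is legitimate and returns precisely the original expression for $W_{dis}(a_1,a_2)$.

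As a cross-check I could instead argue from the Heisenberg-Weyl representation $W_{dis}(a_1,a_2) = \omega^{-2a_1a_2} D(2a_2,-2a_1)U$ of Theorem~\ref{thm:wigner_hw_connection}. This route first requires noting that the flip $U$ is itself Hermitian (indeed $U^\dagger = U$ by the same $\ell \mapsto -\ell$ reindexing), and then combining the adjoint rule $D(c_1,c_2)^\dagger = \omega^{c_1 c_2} D(-c_1,-c_2)$ with the intertwining identity $U D(c_1,c_2) = D(-c_1,-c_2) U$ to move $U$ back past $D$ and recover the original operator. Either derivation works, but the direct one above is shorter.

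I expect no serious obstacle. The only points demanding care are the conjugation of the root-of-unity phase and the validity of the index reflection $\ell \mapsto -\ell$, both of which are controlled by Property~\ref{prop1}; the genuinely separate piece is the $p=2$ case, where $2^{-1}$ is undefined and one must fall back on the explicit Pauli expansion in \eqref{dis_wigner_p2}.
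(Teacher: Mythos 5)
Your proposal is correct, and for the $p>2$ case it takes a genuinely different (and more elementary) route than the paper. The paper handles $p>2$ by first establishing the intertwining identity $D(a_1,a_2)U = UD(-a_1,-a_2)$ for the flip operator $U$, and then taking the adjoint of the Heisenberg--Weyl representation $W_{dis}(a_1,a_2) = \omega^{-2a_1a_2}D(2a_2,-2a_1)U$ from Theorem~\ref{thm:wigner_hw_connection}, using $D(c_1,c_2)^\dagger = \omega^{c_1c_2}D(-c_1,-c_2)$ and the intertwining relation to move $U$ back past $D$ --- precisely the ``cross-check'' you sketch at the end. Your primary argument instead computes $W_{dis}(a_1,a_2)^\dagger$ directly from the defining sum and recovers the original expression by the reflection $\ell \mapsto -\ell$, which is legitimate for exactly the reasons you give (the summand depends on $\ell$ only modulo $p$ via Property~\ref{prop1} and the bra-ket convention, and negation permutes the residues). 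Your route is shorter and self-contained; the paper's route has the side benefit of recording the identity $D(a_1,a_2)U = UD(-a_1,-a_2)$, which clarifies how $U$ interacts with the Heisenberg--Weyl group. The $p=2$ case is handled identically in both: read Hermiticity off the real Pauli expansion \eqref{dis_wigner_p2}.
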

\begin{proof}
  To simplify the proof, we split into two cases, $p=2$ and $p>2$.

  \underline{Case 1, $p=2$:} By definition \eqref{dis_wigner_p2}, we can easily see that $W_{dis}(a_1,a_2)$ is Hermitian for $p=2$.

  \underline{Case 2, $p>2$:} In this case, we first prove the following useful identity which holds for $p>2$:
  \[
    D(a_1,a_2) U = U D(-a_1,-a_2), \text{ where } U := \sum_{\ell \in [p]} \ket{-\ell} \bra{\ell}.
  \]
  We calculate
  \begin{align*}
    D(a_1,a_2) U & = \sum_{\ell} \omega^{a_2\ell} \ket{a_1 + \ell} \bra{-\ell} \\
                 & = \sum_{\ell} \omega^{-a_2\ell} \ket{a_1 - \ell} \bra{\ell} \\
                 & = \sum_{\ell} \omega^{-a_2\ell} \ket{-(-a_1 + \ell)} \bra{\ell} \\
                 & = U D(-a_1,-a_2).
  \end{align*}
  Since by Theorem \ref{thm:wigner_hw_connection} we have that $W_{dis}(a_1,a_2) = \omega^{-2a_1a_2} D(2a_2,-2a_1)U$ and $U$ is obviously Hermitian we have that
  \begin{align*}
    W_{dis}(a_1, a_2)^{\dagger} & = \omega^{2a_1a_2} U^{\dagger} D(2a_2, -2a_1)^{\dagger} \\
                                & = \omega^{-2a_1a_2} U D(-2a_2, 2a_1) \\
                                &  = \omega^{-2a_1a_2} D(2a_2, -2a_1) U \\
                                & = W_{dis}(a_1, a_2).
  \end{align*}
\end{proof}

\subsection{Verifying \ref{W1}, \ref{W2}, \ref{W3}}

\underline{Verifying \ref{W1}:} We calculate
\begin{align*}
  \tr\left( W_{dis}(a_1, a_2)\right)  & = \tr \Bigl(\frac{1}{p} \sum_{b_1, b_2}\, \omega^{-(a_1 b_1 + a_2 b_2)} FW_{dis}(b_1,b_2) \Bigr) \\
                                      & = \frac{1}{p} \sum_{b_1, b_2}\, \omega^{-(a_1 b_1 + a_2 b_2)} \tr\left(FW_{dis}(b_1, b_2) \right) \\
                                      & = 1.
\end{align*}

\underline{Verifying \ref{W2}:} This was already verified in Section \ref{sec:props_dw_dfw}. \\

\underline{Verifying \ref{W3}:} For some $(n,m) \in \Z_p^2 \setminus \{ (0, 0) \}$ fixed, recall the definition of a line on the $\Z_p^2$ lattice
\[
  \ell_{k} := \ell_{n,m,k} = \{ (x,y) : mx + ny \equiv k \modks{p} \}
\]
and a complete set of parallel lines is the collection $\{ \ell_{k} \}_{k \in \Z_p}$. For this line, we have the associated line operator $P_{\ell_k}$ is the following:
\begin{align*}
  P_{\ell_k} & := \frac{1}{p}\sum_{(x,y) \in \ell_k} W_{dis}(x,y) \\
   & = \frac{1}{p^2} \sum_{(x,y) \in \ell_k} \sum_{b_1, b_2}\, \omega^{-(x b_1 + y b_2)} \omega^{2^{-1} b_1 b_2} D(b_1, b_2).
\end{align*}
To make the following arguments simpler, we will split into two cases
$p=2$ and $p>2$.

\underline{Case 1, $p=2$:} When $p=2$ there are only three complete
sets of parallel lines each with two lines. Since have already
explicitly written down $W_{dis}$ for $p=2$ (see Equation \ref{dis_wigner_p2}) we can
also write down the line projections as follows:
\[
\renewcommand{\arraystretch}{1.4}
\begin{array}{c|c|c}
(n,m) & P_{\ell_0} & P_{\ell_1} \\
\hline
  (0,1) & \tfrac{1}{2}(I + X) & \tfrac{1}{2}(I - X) \\
  (1,0) & \tfrac{1}{2}(I + Z) & \tfrac{1}{2}(I - Z) \\
  (1,1) & \tfrac{1}{2}(I + Y) & \tfrac{1}{2}(I - Y) \\
\end{array}
\]
It is an easy calculation to show that for each $(n,m)$ the collection
$\{ P_{\ell_0}, P_{\ell_1} \}$ forms a resolution of the identity.

\underline{Case 2, $p>2$:} 
In what follows, we will assume without loss of generality that
$n \neq 0$ because of the symmetry between $(x,b_1)$ and $(y, b_2)$. Since
$n \neq 0$, if $(x,y)$ is on the line $\ell_k$ then
$y = n^{-1} k - n^{-1} m x$. For notational ease, we define
$K := n^{-1} k$ and $M := n^{-1} m$ and therefore we can write:
\[
  P_{\ell_k} = \frac{1}{p^2} \sum_{x} \sum_{b_1, b_2}\, \omega^{-(x b_1 + (K - M x) b_2)} \omega^{2^{-1} b_1 b_2} D(b_1, b_2).
\]
Collecting all of the terms containing $x$, we get:
\[
  P_{\ell_k} = \frac{1}{p^2}  \sum_{b_1, b_2}\, \omega^{-K b_2} \omega^{2^{-1} b_1 b_2} D(b_1, b_2) \sum_{x} \omega^{-(b_1 - M b_2) x}
\]
Now the sum over $x$ is over a complete set of equivalence classes
modulo $p$. Therefore, by Property \ref{prop2}, that sum is non-zero 
if and only if $b_1 - M b_2 \equiv 0 \modks{p} \Longrightarrow b_1 \equiv M
b_2 \modks
{p}$. Therefore, we can simplify the above as
\[
  P_{\ell_k} = \frac{1}{p} \sum_{b_2}\, \omega^{-K b_2} \omega^{2^{-1} M b_2^2} D(M b_2, b_2).
\]
Now let us show that the collection $\{ P_{\ell_k} \}_{k \in \Z_p}$ is a resolution of the identity. \\

\noindent\underline{$P_{\ell_k}$ are orthogonal projectors:}
In what follows we slightly abuse notation and define $\omega(x) := \omega^x$. With this new notation we calculate
\begin{align*}
  P_{\ell_k} P_{\ell_{\tilde{k}}} & = \frac{1}{p^2} \sum_{b_2,\tilde{b}_2}\, \omega\bigl(-K b_2 - \tilde{K} \tilde{b}_2\bigr) \omega\bigl(2^{-1} M b_2^2 + 2^{-1} M \tilde{b}_2^2\bigr) D(M b_2, b_2) D(M \tilde{b}_2, \tilde{b}_2) \\
                                  & =  \frac{1}{p^2} \sum_{b_2,\tilde{b}_2}\, \omega\bigl(-(K b_2 + \tilde{K} \tilde{b}_2)\bigr) \omega\bigl(2^{-1} M (b_2^2 + \tilde{b}_2^2)\bigr)  \omega\bigl( M b_2 \tilde{b}_2 \bigr) D(M (b_2+\tilde{b}_2), b_2+\tilde{b}_2) \\
                                  & = \frac{1}{p^2} \sum_{b_2,\tilde{b}_2}\,  \omega\bigl(-(K b_2 + \tilde{K} \tilde{b}_2)\bigr) \omega\bigl(2^{-1} M (b_2^2 + 2 b_2 \tilde{b}_2 + \tilde{b}_2^2)\bigr) D(M (b_2+\tilde{b}_2), b_2+\tilde{b}_2) \\
                                  & = \frac{1}{p^2} \sum_{b_2,\tilde{b}_2}\, \omega\bigl(-(K b_2 + \tilde{K} \tilde{b}_2)\bigr) \omega\bigl(2^{-1} M (b_2 + \tilde{b}_2)^2\bigr) D(M (b_2+\tilde{b}_2), b_2+\tilde{b}_2) 
\end{align*}
Now to finish this proof we will replace the sum over $\tilde{b}_2$
with a sum over $B := b_2 + \tilde{b}_2$. This replacement is valid
since due to Property 1 the above sum only depends on
the value of $b_2$ and $\tilde{b}_2$ modulo $p$ and the fact that the
map $(b_2, \tilde{b}_2) \mapsto (b_2, b_2+\tilde{b}_2)$ is 
a bijection from $\Z_p$ to  $Z_p$. 
\begin{align*}
  P_{\ell_k} P_{\ell_{\tilde{k}}} & = \frac{1}{p^2} \sum_{b_2,B}\, \omega\bigl(-(K b_2 + \tilde{K} (B - b_2))\bigr) \omega\bigl(2^{-1} M B^2\bigr) D(M B, B) \\
                                  & = \frac{1}{p^2} \sum_{B}\, \omega\bigl(-\tilde{K} B\bigr) \omega\bigl(2^{-1} M B^2\bigr) D(M B, B) \sum_{b_2} \omega\bigl(-(K - \tilde{K}) b_2 \bigr).
\end{align*}
Now by Property \ref{prop2}, the sum over $b_2$ is zero if $K \neq \tilde{K}$. In the case when $K = \tilde{K}$ then we have
\[
  P_{\ell_k} P_{\ell_{\tilde{k}}} = \frac{1}{p} \sum_{B}\, \omega^{\tilde{K} B} \omega^{2^{-1} M B^2} D(M B, B) = P_{\ell_k}
\]
which proves that the collection $\{ P_{\ell_k} \}$ is a collection of mutually orthogonal projections. \\

\noindent \underline{$\sum_k P_{\ell_k} = I$:} Recall we have that
\[
  P_{\ell_k} = \frac{1}{p} \sum_{b_2}\, \omega^{K b_2 + 2^{-1} M b_2^2} D(M b_2, b_2)
\]
where $K = n^{-1} k$. Replacing $K$ with $k$ and grouping the $k$ terms we get:
\begin{align*}
  \sum_k P_{\ell_k} & = \frac{1}{p} \sum_k \sum_{b_2}\, \omega^{-n^{-1} b_2 k} \omega^{2^{-1} M b_2^2} D(M b_2, b_2) \\
                    & =\frac{1}{p} \sum_{b_2}\, \omega^{2^{-1} M b_2^2} D(M b_2, b_2)  \sum_k \omega^{-n^{-1} b_2k}
\end{align*}
Again using Property \ref{prop2}, we conclude the sum over $k$ is non-zero if and only if $b_2 = 0$. Therefore
\[
  \sum_k P_{\ell_k} = D(0,0) = I
\]
as we wanted to show.

As a remark, since $A_\alpha = W_{dis}(a_2, -a_1)$ and the mapping $(a_1, a_2) \mapsto (a_2, -a_1)$ maps lines to lines, we have also proved that Wootters' collection also satisfies his three conditions.

\section{Dynamics with Wigner and Fourier-Wigner transforms}
\label{sec:dynamics}
\subsection{The Wigner and Fourier-Wigner transforms for many-body systems}
In previous sections, we have considered the discrete Wigner and Fourier-Wigner transforms on the space $\C^p$. We can generalize this analysis to the space $(\C^p)^{\otimes N}$, the many-body case, by simply taking tensor products. In particular, if $a_1 = (a_1^1, a_1^2, \cdots, a_1^N) \in [p]^N$ and $a_2 = (a_2^1, a_2^2, \cdots, a_2^N) \in [p]^N$ then
\[
  D(a_1, a_2) := \bigotimes_{i=1}^N D(a_1^i, a_2^i).
\]
Because of this simple relationship, the many-body Heisenberg-Weyl satisfies analogous identities to the one-body case. That is $\forall a_1, a_2, b_1, b_2 \in [p]^N$:
\[
  D(a_1, a_2) D(b_1, b_2) = \omega^{a_2 \cdot b_1} D(a_1 + b_1, a_2 + b_2)
\]
\[
  D(a_1, a_2)^{-1} = D(a_1, a_2)^\dagger = \omega^{a_1 \cdot a_2} D(-a_1, -a_2).
\]
Furthermore, we can write down the corresponding many-body transforms as follows:\\

\underline{The Discrete Fourier-Wigner Transform:}
\begin{align*}
  FW_{dis}(a_1,a_2) & = \omega^{2^{-1} a_1 \cdot a_2} D(a_1, a_2) \\
                    & = \bigotimes_{i=1}^N FW_{dis}(a_1^i, a_2^i).
\end{align*}

\underline{The Discrete Wigner Transform:}
\begin{align*}
  W_{dis}(a_1,a_2) & = \cF^{-1} FW_{dis}(a_1, a_2) \\
                   & = \sum_{b_1, b_2} \omega^{-(a_1 \cdot b_1 + a_2 \cdot b_2)} FW_{dis}(b_1, b_2) \\
                   & = \bigotimes_{i=1}^N W_{dis}(a_1^i, a_2^i).
\end{align*}
It is straightforward to show that the collections $\{ W_{dis}(a_1, a_2) \}_{a_1, a_2 \in [p]^N}$ and $\{ FW_{dis}(a_1, a_2) \}_{a_1, a_2 \in [p]^N}$ form an orthogonal basis for $p^N \times p^N$ complex matrices.

\subsection{The dynamics of Wigner and Fourier-Wigner functions}
Let $\rho$ be the discrete density matrix (a positive semidefinite matrix with $\tr{\rho} = 1$). We define Wigner and
Fourier-Wigner coefficients for $\rho$ as follows:
\begin{align*}
  \rho_W(a_1, a_2) & := \tr(W_{dis}(a_1,a_2)^\dagger \rho), \\
  \rho_{FW}(a_1, a_2) & := \tr(FW_{dis}(a_1,a_2)^\dagger \rho).
\end{align*}
The orthogonality gives
\[
  \rho = \frac{1}{p^N} \sum_{a_1,a_2 \in [p]^N} \rho_W(a_1, a_2) W_{dis}(a_1,a_2)
          = \frac{1}{p^N} \sum_{a_1,a_2 \in [p]^N} \rho_{FW}(a_1, a_2) FW_{dis}(a_1,a_2).
\]
Suppose $H$ is a Hamiltonian operator, we have the von Neumann dynamics equation:
\[
  i \hbar \od{\rho}{t} = [H, \rho].
\]
We expand $H$ and $\pd{\rho}{t}$ into the Wigner and Fourier-Wigner bases in the same way we expanded $\rho$ in these bases by defining
\begin{align*}
  H_W(a_1, a_2) & := \tr(W_{dis}(a_1,a_2)^\dagger H), \\
  H_{FW}(a_1, a_2) & := \tr(FW_{dis}(a_1,a_2)^\dagger H).
\end{align*}
and similarly for $\pd{\rho}{t}$.

By performing this expansion, we get the dynamics in the Wigner basis is
\[
\frac{i \hbar}{p^N} \sum_{a_1,a_2}\pd{\rho_W(a_1,a_2)}{t} W_{dis}(a_1,a_2) = \frac{1}{(p^N)^2} \sum_{b_1,b_2} \sum_{c_1,c_2} H_W(b_1,b_2) \rho_W(c_1,c_2) [W_{dis}(b_1,b_2), W_{dis}(c_1,c_2)] 
\]
Hence
\[
i \hbar \pd{\rho_W(a_1,a_2)}{t} = \frac{1}{(p^N)^2} \sum_{b_1,b_2} \sum_{c_1,c_2} H_W(b_1,b_2) \rho_W(c_1,c_2) \tr\left(W_{dis}(a_1,a_2)^\dagger [W_{dis}(b_1,b_2), W_{dis}(c_1,c_2)] \right)
\]
Using the fact that $W_{dis}(a_1,a_2)$ is Hermitian we can define $\Gamma_{\alpha,\beta,\gamma}^W$ (where $\alpha = (a_1, a_2)$, $\beta = (b_1,b_2)$, $\gamma = (c_1,c_2)$) as follows:
\[
\Gamma_{\alpha,\beta,\gamma}^W := \tr\left(W_{dis}(a_1,a_2) W_{dis}(b_1,b_2) W_{dis}(c_1,c_2) \right)
\]
and write
\[
i \hbar \pd{\rho_W(a_1,a_2)}{t} = \frac{1}{(p^N)^2} \sum_{b_1,b_2} \sum_{c_1,c_2} (\Gamma_{\alpha, \beta, \gamma}^W - \Gamma_{\alpha, \gamma, \beta}^W) H_W(b_1,b_2) \rho_W(c_1,c_2).
\]

Performing analogous calculations for the Fourier-Wigner transform gives
\[
i \hbar \pd{\rho_{FW}(a_1,a_2)}{t} = \frac{1}{(p^N)^2} \sum_{b_1,b_2} \sum_{c_1,c_2} H_{FW}(b_1,b_2) \rho_{FW}(c_1,c_2) \tr\left(FW_{dis}(a_1,a_2)^\dagger [FW_{dis}(b_1,b_2), FW_{dis}(c_1,c_2)] \right).
\]
Noting that $FW_{dis}(a_1,a_2)^\dagger = FW_{dis}(-a_1,-a_2)$ we can define
\[
\Gamma_{\alpha,\beta,\gamma}^{FW} := \tr\left(FW_{dis}(a_1,a_2) FW_{dis}(b_1,b_2) FW_{dis}(c_1,c_2) \right) 
\]
and write
\[
i \hbar \pd{\rho_{FW}(a_1,a_2)}{t} = \frac{1}{(p^N)^2} \sum_{b_1,b_2} \sum_{c_1,c_2} (\Gamma_{-\alpha, \beta, \gamma}^{FW} - \Gamma_{-\alpha, \gamma, \beta}^{FW}) H_{FW}(b_1,b_2) \rho_{FW}(c_1,c_2).
\]
Therefore to write down the dynamics equations in the Fourier and Fourier-Wigner bases, we simply need to find an expression for $\Gamma_{\alpha,\beta,\gamma}^{W}$ and $\Gamma_{\alpha,\beta,\gamma}^{FW}$.
\subsubsection{Evaluating $\Gamma_{\alpha,\beta,\gamma}^{FW}$}
By the definition of the discrete Fourier-Wigner transform,
\[
  \begin{split}
    \Gamma_{\alpha,\beta,\gamma}^{FW} &= \tr \big( FW_{dis}(a_1, a_2) FW_{dis}(b_1,b_2) FW_{dis}(c_1,c_2) \big) \\
                                   &= \omega^{2^{-1} (a_1 \cdot a_2 + b_1 \cdot b_2 + c_1 \cdot c_2)} \tr \big(D(a_1, a_2) D(b_1, b_2) D(c_1, c_2) \big)
  \end{split}
\]
Using the properties of the multi-body Heisenberg-Weyl group we can therefore write
\[
  \Gamma_{\alpha,\beta,\gamma}^{FW} = p^N \omega^{2^{-1} (a_1 \cdot
    a_2 + b_1 \cdot b_2 + c_1 \cdot c_2)} \omega^{a_2 \cdot b_1 + (a_2
    + b_2) \cdot c_1} \, \delta^{p}_{\alpha+\beta+\gamma,0}
\]
If $p > 2$, can simplify the above equation as
\[
  \Gamma_{\alpha,\beta,\gamma}^{FW} = p^N \omega^{2^{-1} (b_2 \cdot c_1 - b_1 \cdot c_2)} \, \delta^{p}_{\alpha+\beta+\gamma,0}
\]
When $p = 2$, one can interpret $\omega^{2^{-1}}$ as $i$ and get
\[
  \Gamma_{\alpha,\beta,\gamma}^{FW} = 2^N (-1)^{a_2 \cdot b_1} i^{a_1
    \cdot a_2} i^{b_1 \cdot b_2} (-i)^{c_1 \cdot c_2} \, \delta^{p}_{\alpha+\beta+\gamma,0}
\]
\subsubsection{Evaluating $\Gamma_{\alpha,\beta,\gamma}^{W}$}
For the coefficients $\Gamma_{\alpha,\beta,\gamma}^W$, we use the relation between the Fourier-Wigner and Wigner transform to get 
\[
    \Gamma_{\alpha,\beta,\gamma}^W = \frac{1}{(p^N)^3} \sum_{a'_1,a'_2} \sum_{b'_1,b'_2} \sum_{c'_1,c'_2} \omega^{-(a_1 \cdot a_1' + a_2 \cdot a_2' + b_1 \cdot b_1' + b_2 \cdot b_2' + c_1 \cdot c_1' + c_2 \cdot c_2')} \tr \big( FW_{dis}(a_1', a_2') FW_{dis}(b_1',b_2') FW_{dis}(c_1',c_2') \big).
\]
If $p > 2$, we can use the equality
\[
  \sum_{a \in [p]^N} \omega^{a \cdot b} = p^N \delta^p_{b,0}
\]
to obtain
\[
  \begin{split}
    \Gamma_{\alpha,\beta,\gamma}^W &= \frac{1}{(p^N)^2} \sum_{b'_1,b'_2} \sum_{c'_1,c'_2} \omega^{-((b_1 - a_1) \cdot b_1' + (b_2 - a_2) \cdot b_2' + (c_1 - a_1) \cdot c_1' + (c_2 - a_2) \cdot c_2')} \omega^{2^{-1} (b_2' \cdot c_1' - b_1' \cdot c_2')} \\
                                   &= \omega^{2 (b_2 - a_2) \cdot (c_1 - a_1) - 2 (b_1 - a_1) \cdot (c_2 - a_2)}.
  \end{split}
\]
Note that when $p=2$ since $2^{-1}$ is not an integer so the above argument does not hold.

\begin{remark}
  Notice that unlike $\Gamma_{\alpha,\beta,\gamma}^{FW}$, $\Gamma_{\alpha,\beta,\gamma}^W$ is non-zero for every $\alpha,\beta,\gamma$. Recall our dynamics equation in the Wigner basis:
  \[
    i \hbar \pd{\rho_W(a_1,a_2)}{t} = \frac{1}{(p^N)^2} \sum_{b_1,b_2} \sum_{c_1,c_2} (\Gamma_{\alpha, \beta, \gamma}^W - \Gamma_{\alpha, \gamma, \beta}^W) H_W(b_1,b_2) \rho_W(c_1,c_2).
  \]
  Using the above formula for $p>2$, we can calculate that $\Gamma_{\alpha,\beta,\gamma}^W - \Gamma_{\alpha,\gamma,\beta}^W$ is not sparse (see below for $p=3$ case, $\Gamma_{-\alpha,\beta,\gamma}^{FW} - \Gamma_{-\alpha,\gamma,\beta}^{FW}$ is included for comparison):
  \[
    \renewcommand{\arraystretch}{1.4}
    \begin{array}{c|c|c}
      p^N & \mathtt{nnz}\bigl(\Gamma_{\alpha,\beta,\gamma}^W - \Gamma_{\alpha,\gamma,\beta}^W\bigr) & \mathtt{nnz}\bigl(\Gamma_{-\alpha,\beta,\gamma}^{FW} - \Gamma_{-\alpha,\gamma,\beta}^{FW}\bigr) \\
      \hline
      3 & 432 & 48 \\
      9 & 349920 & 4320 \\
    \end{array}
  \]
  While there does not seem to be a simple formula for $p=2$, we can verify that this same pattern holds numerically
    \[
    \renewcommand{\arraystretch}{1.4}
    \begin{array}{c|c|c}
      p^N & \mathtt{nnz}\bigl(\Gamma_{\alpha,\beta,\gamma}^W - \Gamma_{\alpha,\gamma,\beta}^W\bigr) & \mathtt{nnz}\bigl(\Gamma_{-\alpha,\beta,\gamma}^{FW} - \Gamma_{-\alpha,\gamma,\beta}^{FW}\bigr) \\
      \hline
      2 & 24 & 6 \\
      4 & 2208 & 120 \\
    \end{array}
  \]
  We note that this difference in the sparsity of $\Gamma_{\alpha,\beta,\gamma}^W - \Gamma_{\alpha,\gamma,\beta}^W$ and $\Gamma_{\alpha,\beta,\gamma}^{FW} - \Gamma_{\alpha,\gamma,\beta}^{FW}$ may drastically increase the number of computations needed when performing numerical approximations.
\end{remark}

\subsection{Dynamics of the many-body spin system in phase space representation}
Using the expressions for $\Gamma^{FW}_{\alpha,\beta,\gamma}$ we found
above, we can easily expand the dynamics for any two spin
system. Throughout this section we will assume $p=2$ and write
$e_i \in \Z^N_2$ as the $i^{\text{th}}$ standard basis vector of
$\Z^N_2$. As in \cite{Schachenmayer2015}, we consider the following
%semi-classical 
Hamiltonian:
\[
  H = \frac{1}{2} \sum_{i\neq j} \left[ \frac{J_{ij}^{\perp}}{2} (X_i X_j + Y_i Y_j) + J_{ij}^z Z_i Z_j \right] + \Omega \sum_i X_i.
\]
From our calculations above we get,
\[
  H_{FW}(a_1, a_2) = \left\{ \begin{array}{ll}
                            \frac{p^N}{4} J_{ij}^{\perp}, & \text{if } a_1 = e_i + e_j, a_2 = 0, i \neq j, \\
                            \frac{p^N}{4} J_{ij}^{\perp}, & \text{if } a_1 = a_2 = e_i + e_j, i \neq j, \\
                            \frac{p^N}{2} J_{ij}^z, & \text{if } a_1 = 0, a_2 = e_i + e_j, i \neq j, \\
                            p^N \Omega, & \text{if } a_1 = e_i, a_2 = 0, \\
			    0, & \text{otherwise}.
                          \end{array} \right.
\]
Therefore, we arrive at 
\begin{equation} \label{eq:spin_dynamics}
  \begin{split}
  & i\hbar \pd{\rho_{FW}(a_1,a_2)}{t} = i^{a_1 \cdot a_2} \Bigg( \frac{J_{ij}^{\perp}}{4} \sum_{i\neq j} i^{a_2 \cdot (a_1 + e_i + e_j)} [(-1)^{a_1 \cdot a_2} - (-1)^{a_2 \cdot (a_1 + e_i + e_j)}] \rho_{FW}(a_1 + e_i + e_j, a_2) \\
				   & \quad + \frac{J_{ij}^{\perp}}{4} \sum_{i\neq j} i^{(a_1 + e_i + e_j) \cdot (a_2 + e_i + e_j)} [(-1)^{a_2 \cdot (a_1 + e_i + e_j)} - (-1)^{a_1 \cdot (a_2 + e_i + e_j)}] \rho_{FW}(a_1 + e_i + e_j, a_2 + e_i + e_j) \\
                                   & \quad + \frac{J_{ij}^z}{2} \sum_{i\neq j} i^{a_1 \cdot (a_2 + e_i + e_j)} [(-1)^{a_1 \cdot (a_2 + e_i + e_j)} - (-1)^{a_1 \cdot a_2}] \rho_{FW}(a_1, a_2 + e_i + e_j) \\
                                   & \quad + \Omega \sum_i i^{a_2 \cdot (a_1 + e_i)} [(-1)^{a_1 \cdot a_2} - (-1)^{a_2 \cdot (a_1 + e_i)}] \rho_{FW}(a_1 + e_i, a_2) \Bigg).
  \end{split}
\end{equation}

Specializing our equations above for the two-spin system (i.e. $N=2$) the dynamics is

\[
  \renewcommand*{\arraystretch}{1.3}
  \begin{split}
  & \od{}{t} \begin{pmatrix}
    \rho_{FW}^{00,00} & \rho_{FW}^{00,01} & \rho_{FW}^{00,10} & \rho_{FW}^{00,11} \\
    \rho_{FW}^{01,00} & \rho_{FW}^{01,01} & \rho_{FW}^{01,10} & \rho_{FW}^{01,11} \\
    \rho_{FW}^{10,00} & \rho_{FW}^{10,01} & \rho_{FW}^{10,10} & \rho_{FW}^{10,11} \\
    \rho_{FW}^{11,00} & \rho_{FW}^{11,01} & \rho_{FW}^{11,10} & \rho_{FW}^{11,11}
  \end{pmatrix} = J_{12}^{\perp} \begin{pmatrix}
    0 & \rho_{FW}^{11,01} - \rho_{FW}^{11,10} & \rho_{FW}^{11,10} - \rho_{FW}^{11,01} & 0 \\
    \rho_{FW}^{10,11} & -\rho_{FW}^{10,01} & \rho_{FW}^{10,10} & -\rho_{FW}^{10,00} \\
    \rho_{FW}^{01,11} & \rho_{FW}^{01,01} & -\rho_{FW}^{01,10} & -\rho_{FW}^{01,00} \\
    0 & \rho_{FW}^{00,01} - \rho_{FW}^{00,10} & \rho_{FW}^{00,10} - \rho_{FW}^{00,01} & 0
  \end{pmatrix} \\
  & \quad + 2J_{ij}^z \begin{pmatrix}
    0 & 0 & 0 & 0 \\
    -\rho_{FW}^{01,11} & \rho_{FW}^{01,10} & -\rho_{FW}^{01,01} & \rho_{FW}^{01,00} \\
    -\rho_{FW}^{10,11} & -\rho_{FW}^{10,10} & \rho_{FW}^{10,01} & \rho_{FW}^{10,00} \\
    0 & 0 & 0 & 0
  \end{pmatrix} + 2\Omega \begin{pmatrix}
    0 & \rho_{FW}^{01,01} & \rho_{FW}^{10,10} & \rho_{FW}^{10,11} + \rho_{FW}^{01,11} \\
    0 & -\rho_{FW}^{00,01} & \rho_{FW}^{11,10} & \rho_{FW}^{11,11} - \rho_{FW}^{00,11} \\
    0 & \rho_{FW}^{11,01} & -\rho_{FW}^{00,10} & \rho_{FW}^{11,11} - \rho_{FW}^{00,11} \\
    0 & -\rho_{FW}^{10,01} & -\rho_{FW}^{01,10} & -(\rho_{FW}^{01,11} + \rho_{FW}^{10,11})
  \end{pmatrix},
  \end{split}
\]
where $\rho_{FW}^{i_1 i_2, j_1 j_2}$ stands for $\rho_{FW}((i_1, i_2), (j_1, j_2))$ and we have chosen units so that $\hbar=1$.

\subsection{Tensor-product ansatz}
In the many-body spin system, when the entanglement between spins is not strong, the state of the system can be approximated by the tensor-product ansatz. The corresponding density matrix of this ansatz is
\[
  \rho = \rho^{(1)} \otimes \rho^{(2)} \otimes \cdots \otimes \rho^{(N)},
\]
where each $\rho^{(k)}$ is a one-body density matrix satisfying
$\tr(\rho^{(k)}) \equiv 1$. To evolve this ansatz for the density
matrix, the simplest method is to ignore all the coupling terms in the
Hamiltonian, i.e. $H \approx H_0 = \Omega \sum_i X_i$. Such an idea
was used in the quantum kinetic Monte Carlo method proposed in
\cite{Cai2018} by two of the authors, where the Dyson series expansion
is used in the computation and its leading order term is just the
tensor-product state evolved with the Hamiltonian $H_0$. From the
perspective of the Fourier and Fourier-Wigner transform, the ansatz
turns out to be
\begin{equation} \label{eq:tensor_product}
  \begin{split}
    \rho_W(a_1, a_2) = \prod_{k=1}^N \rho_W \big(a_1^{(k)} e_k, a_2^{(k)} e_k\big), & \qquad \rho_{FW}(a_1, a_2) = \prod_{k=1}^N \rho_{FW} \big(a_1^{(k)} e_k, a_2^{(k)} e_k\big), \\
    & \forall a_1 = (a_1^{(1)}, \cdots, a_1^{(N)}) \in \Z_2^N, \quad a_2 = (a_2^{(1)}, \cdots, a_2^{(N)}) \in \Z_2^N.
  \end{split}
\end{equation}
The dynamics of $\rho_{FW}$ with Hamiltonian $H_0$, viewed as a
Galerkin projection to the tensor product ansatz, can be obtained by
removing all the terms with $J_{ij}^{\perp}$ and $J_{ij}^z$ from
\eqref{eq:spin_dynamics}.

Another method to evolve the density matrix $\rho$ is proposed in
\cite{Schachenmayer2015}, where the tensor-product ansatz is linked to
a classical interacting spin system. Using the development in this
paper, we can now explicitly formulate such a connection. According to
the ansatz \eqref{eq:tensor_product}, the dynamics of the
Fourier-Wigner transform is fully determined by the evolution of
$\rho_{FW}(0,e_k)$, $\rho_{FW}(e_k,0)$ and $\rho_{FW}(e_k, e_k)$ for
$k = 1,\cdots,N$.  Assuming $\hbar = 1$, we can write down the
equations for these three coefficients using \eqref{eq:spin_dynamics}
as
\begin{align*}
  \od{\rho_{FW}(e_k,0)}{t} &= J_{ij}^{\perp} \rho_{FW}(0,e_k) \sum_{i\neq k} \rho(e_i,e_i) - 2 J_{ij}^z \rho_{FW}(e_k,e_k) \sum_{i\neq k} \rho_{FW}(0,e_i), \\
  \od{\rho_{FW}(0,e_k)}{t} &= J_{ij}^{\perp} \rho_{FW}(e_k,e_k) \sum_{i\neq k} \rho_{FW}(e_i,0) - J_{ij}^{\perp} \rho_{FW}(e_k,0) \sum_{i\neq k} \rho_{FW}(e_i,e_i) + 2\Omega \rho_{FW}(e_k,e_k), \\
  \od{\rho_{FW}(e_k,e_k)}{t} &= -J_{ij}^{\perp} \rho_{FW}(0,e_k) \sum_{i\neq k} \rho_{FW}(e_i,0) + 2 J_{ij}^z \rho_{FW}(e_k,0) \sum_{i\neq k} \rho_{FW}(0,e_i) - 2\Omega \rho_{FW}(0,e_k).
\end{align*}
Here the ansatz \eqref{eq:tensor_product} has been applied to write
all quantities in terms of $\rho_{FW}(0,e_k)$, $\rho_{FW}(e_k,0)$ and
$\rho_{FW}(e_k, e_k)$.  By interpreting $\rho_{FW}(e_k,0)$,
$\rho_{FW}(e_k,e_k)$ and $\rho_{FW}(0,e_k)$ as quantities proportional
to the $x$, $y$, and $z$ angular momentum of the $k$-th classical
spin, the above system is exactly the classical spin system introduced
in \cite{Schachenmayer2015, Krech1998}. The numerical analysis of such
tensor product ansatz, which involves understanding of propagation of
quantum entanglement, will be left for future works.

\section*{Acknowledgements}
This work is partially supported by the National Science
  Foundation under award RNMS11-07444 (KI-Net). ZC is also supported
  by National University of Singapore Startup Fund under Grant
  No.~R-146-000-241-133, JL is supported in part by National Science
  Foundation award DMS-1454939, and KS is also supported by a National
  Science Foundation Graduate Research Fellowship under Grant No.
  DGE-1644868. We also thank Robert Calderbank for helpful discussions.

\bibliographystyle{plain}
\bibliography{discrete_wigner_final}

\end{document}